\documentclass[fleqn,10pt]{SICE_FES26}

\usepackage{amsmath,latexsym} 
\usepackage{amssymb}  
\usepackage{amsfonts}  
\usepackage{cite,color}

\newtheorem{thm}{Theorem}

\newtheorem{lemma}[thm]{Lemma}

\newtheorem{prop}[thm]{Proposition}

\allowdisplaybreaks

\title{On the Oja-Flow-Based Low-Rank Approximation of Kalman-Bucy Filters for Linear Time-Varying Systems
}

\author{Kentaro Ohki${}^{1\dagger}$}
\speaker{Kentaro Ohki}

\affils{${}^{1}$Department of Applied Computer Engineering, Tokai University, Hiratsuka, Kanagawa, Japan\\
(E-mail: ohki@tokai.ac.jp)\\
}
\abstract{
This paper studies a low-rank Kalman-Bucy filtering framework for linear time-varying systems through the tracking analysis of Oja's principal component flow.
Under structured assumptions on the eigenspaces and their time variation, we show that the Oja flow can remain in a neighborhood of the time-varying dominant subspace by tuning a parameter of the flow, rather than tracking it exactly. 
These restrictive assumptions identify a tractable class of linear time-varying systems and provide a theoretical basis for low-rank filtering, as illustrated by a numerical experiment.
}

\keywords{%
Oja flow, Low-rank approximation, Kalman-Bucy filter, linear time-varying systems
}

\begin{document}

\maketitle


\section{Introduction}

Kalman filter \cite{kalman1960nal} and Kalman-Bucy filter \cite{kalman1961nrl} are common estimation tools for discrete-time and continuous-time linear stochastic dynamical systems, respectively.  
They are also extended to nonlinear systems with some Gaussian approximation of the stochastic process such as linearization of the system or Laplace approximation of the noise statistics. 
These filters need to calculate a matrix Riccati equation, which is computational bottleneck if the system's dimension is huge.  
Therefore, we need to lighten the calculation of the Riccati equation. 

There are two ways to address this problem: model order reduction and low-rank approximation. 
The model order reduction is a tradition and wide variety of the methods have been proposed \cite{antoulas2005approximation,AntoulasBeattieGuegercin2020,breiten2021balancing}. 
Some of these methods preserve the system's properties, like stability and passivity.  
On the other hand, low-rank approximation for filter and controller synthesis is an alternative method that does not require the model order reduction, which does not govern approximation errors of the model. 
The continuous QR decomposition algorithm that is an iterative algorithm to obtain the dominant modes is an example of this manner, and applied to low-rank Kalman-Bucy filter for time-varying systems \cite{tranninger2022detectability}. 
The continuous QR decomposition algorithm is essentially the same as the Oja's principal component flow (Oja flow) \cite{TsuzukiOhki2025global}, and exponential convergence for time-invariant matrices is proved under a spectrum gap condition \cite{kuo2025asymptotic,TsuzukiOhki2025global}, which means that the Oja flow can track the dominant subspace of a class of time-varying systems. 

As described above, the convergence rate is exponential, meaning that practically the perfect dominant sub-eigenspace tracking is impossible for time-varying systems. 
Nevertheless, in \cite{tranninger2022detectability}, the authors only consider the perfect tracking case and show the performance of the low-rank Kalman-Bucy filter.  
This paper examines the online Oja-flow for time-varying matrices and then applies the result to low-rank Kalman-Bucy filter. 
Because of the difficulty of the analysis of general LTV systems, we only focus on LTV systems with a smooth, bounded coefficient matrices.  

The contribution of this paper is twofold. First, we analyze the tracking behavior of the time-varying Oja flow and show that, under structured assumptions, the solution remains in a neighborhood of the time-varying dominant subspace by tuning a parameter of the flow. Second, we use this result to interpret and design a low-rank Kalman-Bucy filter for LTV systems. Thus, the theoretical core of the paper is the dominant-subspace tracking analysis, while the filtering result is its application.

The rest of the paper is organized as follows. 
We briefly introduce the Oja flow, which is the key tool for low-rank Kalman-Bucy filters, and its convergence property for constant square matrices in the next section. 
Section 3 provides the main contributions of this paper. First we analyze the tracking performance of a two-dimensional rotating matrix and then generalize the results for a wider class of time-varying matrices. 
Section 4 demonstrates a numerical example and discuss the results. 
Section 5 concludes the paper and comment on the future direction of this work.

\paragraph*{\textbf{Notation:} }
The sets of real and complex numbers are ${\mathbb{R}}$ and $\mathbb{C}$. The set of $n \times m$ real matrices is ${\mathbb{R}}^{n\times m}$. $I_n$ is the $n \times n$ identity matrix, and $0_{n,m}$ is the $n \times m$ zero matrix. For simplicity, we denote $0_{n} = 0_{n,n}$; if the dimension is trivial, $0$ is also used. 
$A^{\top}$ and $A^{\dagger}$ denote the transpose and Hermitian conjugate of a matrix $A$. $\|x\|$ is the Euclidean norm for a vector $x$ and $\| A \| _{\rm ind}$ is its induced norm for a matrix $A$. For a symmetric matrix $A$, $A > 0$ ($A \geq 0$) indicates it is positive-definite (semidefinite). $A^{1/2}$ denotes the unique positive-semidefinite square root. The eigenvalues of a square matrix $A \in {\mathbb{C}}^{n \times n}$ are ordered such that ${\mathrm{Re}}(\lambda_1(A)) \geq \dots \geq {\mathrm{Re}}(\lambda_n(A))$. The corresponding (generalized) eigenvector is $\psi_i(A)$, and the matrix of eigenvectors is $\Psi(A) := [\psi _{1}(A) , \dots , \psi _{n}(A)]$. The Stiefel manifold is ${\mathrm{St}}(r,n) := \{X \in {\mathbb{R}}^{n \times r} \mid X^{\top}X = I_r\}$. 
For a given $U \in {\mathbb{R}}^{n\times r}$ and $\delta >0$, 
neighborhood is denoted by ${\mathcal{N}}_{\delta} (U) := \{ U+X \ | \ X \in {\mathbb{R}}^{n\times r},\ \| X \| _{\rm ind} < \delta \}$.

\section{Previous Works}

Yamada and Ohki proposed a low--rank Kalman-Bucy filter \cite{yamada2020lowrank}, which was analyzed by  \cite{yamada2021comparison,TsuzukiOhki2024}. 
The key idea is the use of a principal component flow proposed in \cite{bonnabel2012geometry}. 
	The Oja's principal component flow (Oja flow for short) is the following matrix differential equation for $A \in {\mathbb{R}}^{n\times n}$:  
    \begin{align}
    \varepsilon \frac{d}{dt} U(t) = & ( I_{n} - U(t) U(t) ^{\top} ) A U(t) .
    \label{eq:Oja_flow}
    \end{align}
    where $\varepsilon >0$. 
    If $U(0) \in {\mathrm{St}}(r,n)$ and the spectrum gap condition ${\mathrm{Re}}(\lambda _{r}(A) - \lambda _{r+1}(A)) >0$ holds, then $U(t) \in  {\mathrm{St}}(r,n)$ for all $t\geq 0$ even for $A \in {\mathbb{R}}^{n\times n}$ is not symmetric.  
    Furthermore, for almost all initial matrix $U(0) \in {\mathrm{St}}(r,n)$, $U(t)$ converges an element of the following invariant subset 
    \begin{align*}
    &{\mathcal{U}}_{r} (A) 
    \\
    := &\left\{ 
    \Psi (A) \begin{bmatrix} K_{r} \\ 0 _{n-r,r}\end{bmatrix} \in {\mathrm{St}}(r,n) \ | \  K_{r} \in {\mathbb{C}}^{r\times r}
    \right\}
    .
    \end{align*}
    For the details of the analysis of the Oja flow, see \cite{TsuzukiOhki2025global}.

\section{Main Results}

 From Theorem 12 of \cite{TsuzukiOhki2025global}, the convergence rate of the Oja flow \eqref{eq:Oja_flow} with a constant matrix is proportional to the difference of real part of eigenvalues ${\mathrm{Re}} (\lambda _{r}(A) - \lambda _{r+1}(A))$ and accerelated by the tuning parameter $\varepsilon \in (0,1]$. 
    Here, natural questions arise: Can the Oja flow track the dominant mode if $A$ is time-varying? How should we adjust $\varepsilon \in (0,1]$ in \eqref{eq:Oja_flow}? 
    This paper cannot provide comprehensive answers to the questions, as the analysis of time-varying matrices is significantly more challenging.   
    Here, we consider only bounded time-varying matrices and attempt to answer the questions partially.  
    Consider the following time-varying Oja flow
    \begin{align}
    \varepsilon \frac{d}{dt} U(t) = & ( I_{n} - U(t) U(t) ^{\top} ) A(t) U(t) .
    \label{eq:time-varying_Oja_flow}
    \end{align}
    with $U(t_{0}) \in {\mathrm{St}}(r,n)$ at an initial time $t_{0} \in {\mathbb{R}}$, where $A: {\mathbb{R}} \to {\mathbb{R}}^{n\times n}$ is bounded; $\sup _{t \geq t_{0}} \| A(t) \| _{\rm ind} <\infty$.  
    Notice that the equilibrium points of \eqref{eq:time-varying_Oja_flow} may not exist for the time-varying case. 
    We denote ${\mathcal{U}}_{r}(A(t))$ as a set of zero solutions of the right-hand side of Eq. \eqref{eq:time-varying_Oja_flow}. 
    	The following result is a generalization of Theorem 9 in \cite{TsuzukiOhki2025global}.

	\begin{prop}\label{prop:time-varying_positive_part_convergence}
	For a given matrix-valued bounded function $A: {\mathbb{R}} \to {\mathbb{R}}^{n\times n}$ and arbitrarily positive constant $c>0$, take $a>0$ such that $A_{\rm sym}(t) + aI_{n} >cI_{n}$ for all $t\in {\mathbb{R}}$. 
       If $U(t_{0}) = U_{0} \in {\mathbb{R}}^{n\times r}$ is a full rank matrix, then $U(t)$ converges to ${\mathrm{St}}(r,n)$ exponentially with the convergence rate $-4\alpha c/\varepsilon$, where $\alpha$ is defined in Lemma \ref{lem:well_definedness_time_varying_Oja_flow} in Appendix. 
	\end{prop}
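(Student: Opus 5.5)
The plan is to work with the Gram matrix $G(t) := U(t)^{\top}U(t)$ and the defect $E(t) := I_{r} - G(t)$, and to show that a quadratic Lyapunov function of $E$ decays exponentially. Since $U(t)\in{\mathrm{St}}(r,n)$ exactly when $E(t)=0$, this gives convergence to the Stiefel manifold.

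I will take the flow in its shifted form, with $A(t)$ replaced by $A(t)+aI_{n}$. On ${\mathrm{St}}(r,n)$ the shift changes nothing, because $(I_{n}-UU^{\top})U = 0$ there, so the trajectories of \eqref{eq:time-varying_Oja_flow} on the Stiefel manifold are unaffected. I will check against Lemma \ref{lem:well_definedness_time_varying_Oja_flow} that this is the form in which the flow is posed off the manifold. I also take $A_{\rm sym}(t)$ to mean $(A(t)+A(t)^{\top})/2$.

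First, differentiate $G$ along \eqref{eq:time-varying_Oja_flow}. Using $U^{\top}(I_{n}-UU^{\top}) = (I_{r}-G)U^{\top}$, one gets
\begin{align*}
\varepsilon \frac{d}{dt}E(t) = -\bigl( E(t) M(t) + M(t)^{\top} E(t) \bigr),
\qquad
M(t) := U(t)^{\top}\bigl(A(t)+aI_{n}\bigr)U(t).
\end{align*}
This is a Lyapunov-type linear equation in $E$. The hypothesis $A_{\rm sym}(t)+aI_{n} > cI_{n}$ then gives $M_{\rm sym}(t) \geq c\,G(t)$.

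Next, set $V(t) := \mathrm{tr}(E(t)^{2}) = \|E(t)\|_{F}^{2}$. Differentiating and using the cyclicity of the trace,
\begin{align*}
\frac{d}{dt}V = -\frac{2}{\varepsilon}\,\mathrm{tr}\bigl(E(EM+M^{\top}E)\bigr)
= -\frac{4}{\varepsilon}\,\mathrm{tr}\bigl(E M_{\rm sym} E\bigr)
\leq -\frac{4c}{\varepsilon}\,\mathrm{tr}(E G E).
\end{align*}
Here $E$ is symmetric, so $EM_{\rm sym}E \ge c\,EGE$ follows from $M_{\rm sym}\ge cG$.

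The remaining step, and the main obstacle, is to bound $\mathrm{tr}(EGE)$ below by a multiple of $V$. For this I need a uniform lower bound $G(t) \geq \alpha I_{r}$ along the trajectory. I will take this from Lemma \ref{lem:well_definedness_time_varying_Oja_flow}: full rank of $U_{0}$ is preserved, and the smallest eigenvalue of $G$ stays bounded below by $\alpha>0$. If the lemma must be reproved, the approach is to track $\lambda_{\min}(G)$ through the $E$-equation. Whenever $\lambda_{\min}(G)<1$, the corresponding eigenvalue of $E$ is positive. The $E$-equation then drives that eigenvalue down, so $\lambda_{\min}(G)$ increases, because $M_{\rm sym}>0$. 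Consequently $\lambda_{\min}(G(t)) \geq \min\{\lambda_{\min}(G(t_{0})),1\} =: \alpha$.

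With this bound, $\mathrm{tr}(EGE)\geq \alpha V$, hence $\dot V \leq -(4\alpha c/\varepsilon)V$. Grönwall's inequality gives $V(t)\leq e^{-4\alpha c (t-t_{0})/\varepsilon}V(t_{0})$, which is the claimed exponential convergence of $U(t)$ to ${\mathrm{St}}(r,n)$ with rate $-4\alpha c/\varepsilon$.
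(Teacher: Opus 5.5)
Your proof is correct and follows the paper's route. The paper likewise runs the flow with $B(t)=A(t)+aI_{n}$ in place of $A(t)$, so your reading of the shift is the right one. It takes the uniform bound $\lambda_{r}(U(t)^{\top}U(t))\geq\alpha$ from Lemma~\ref{lem:well_definedness_time_varying_Oja_flow}, and defers the decay estimate to the argument of Theorem~9 of \cite{TsuzukiOhki2025global}. Your computation $\frac{d}{dt}\mathrm{tr}(E^{2})\leq-\frac{4c}{\varepsilon}\mathrm{tr}(EGE)\leq-\frac{4\alpha c}{\varepsilon}\mathrm{tr}(E^{2})$ carries that step out explicitly and yields exactly the stated rate.

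The only divergence is your fallback for the lower bound on $U^{\top}U$. The paper obtains it from the closed-form Riccati solution for $UU^{\top}$ (Lemma~\ref{lemma:kilicaslan2012existence}) together with the arguments of the earlier work. Your monotonicity argument for $\lambda_{\min}(G)$ on $(0,1)$ rests on $\varepsilon\, v^{\top}\dot{G}v=2(1-\lambda_{\min}(G))\,v^{\top}M_{\rm sym}v\geq 2c\,\lambda_{\min}(G)(1-\lambda_{\min}(G))$, and it is a valid, more elementary substitute.
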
	
	
	The proof is almost the same as that of Theorem 9 in \cite{TsuzukiOhki2025global}, and is given in Appendix \ref{proof:time-varying_positive_part_convergence}. 
    From Proposition \ref{prop:time-varying_positive_part_convergence}, the Oja flow for bounded time-varying matrices can remain on ${\mathrm{St}}(r,n)$ by choosing a large $a>0$.

	Although the Oja flow \eqref{eq:time-varying_Oja_flow} can also converge to the Stiefel manifold, it is unclear if the solution $U(t)$ of \eqref{eq:time-varying_Oja_flow} can also track the dominant eigen subspaces when the dominant eigen subspaces vary. 
	In the general tracking problem, perfect tracking is impossible if the target's behavior is unknown. 
	The following is a toy problem. 
	
    \begin{prop}\label{prop:rotating_A}
    Let $(r,n) = (1,2)$ and consider the Oja flow \eqref{eq:Oja_flow} with the following time-varying matrix. 
    \begin{align*}
    A (t) = &
    R(t)
    \Lambda 
    R(t)^{\top},
    \\
    R(t) :=& \begin{bmatrix}
    \cos (\omega t) & \sin (\omega t) \\ -\sin (\omega t) & \cos (\omega t)
    \end{bmatrix},
     \ 
     \Lambda 
   := \begin{bmatrix}
    \lambda _{1} & 0 \\ 0 & \lambda _{2}, 
    \end{bmatrix}, 
    \end{align*}
    where $\lambda _{1} > \lambda _{2}$ and $\omega >0$. 
    Then, for any $\bar{\delta} \in (0,1)$, there exists $\varepsilon \in (0,1]$ such that $U(t)$ remains in the neighborhood of the dominant eigenvector $\psi _{1}(A(t)) =  \begin{bmatrix} \cos (\omega t) & -\sin (\omega t) \end{bmatrix}^{\top}$
    \begin{align*}
    & {\mathcal{N}}_{\bar{\delta}}( \psi _{1}(A(t))) 
    \\
    := & \big{\{} X \in {\mathrm{St}}(1,2) \ \big{|} \ 
    X = \psi _{1}(A(t))  + Y, 
    \\ 
    & \hspace{3cm} Y \in {\mathbb{R}}^{2}, \| Y \| _{\rm ind}< \bar{\delta}
    \big{\}}
    \end{align*}
    for $t\geq t_{0} \geq 0$ if $U(t_{0}) \in {\mathcal{N}}_{\delta }( \psi _{1}(A(t_{0})))$ where $\delta \in (0, \bar{\delta}/2)$. 
    \end{prop}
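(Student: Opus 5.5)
The plan is to reduce the flow to a scalar angle equation in a frame rotating with $\psi_1(A(t))$ and then run a one-dimensional invariance argument.

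\textbf{Step 1: the flow stays on the circle.} Since $A(t)$ is symmetric, differentiating $U^{\top}U$ along \eqref{eq:time-varying_Oja_flow} gives
\begin{align*}
\varepsilon \tfrac{d}{dt}(U^{\top}U) = 2(1-U^{\top}U)\,U^{\top}A(t)U .
\end{align*}
Hence $U(t_0)\in{\mathrm{St}}(1,2)$ implies $U(t)\in{\mathrm{St}}(1,2)$ for all $t$. (This can also be seen from Proposition~\ref{prop:time-varying_positive_part_convergence}.)

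\textbf{Step 2: the angle equation.} Write $U(t)=[\cos\phi(t),\,-\sin\phi(t)]^{\top}$ and set $\psi:=\phi-\omega t$. Then $\psi_1(A(t))=R(t)e_1$ and $U=R(t)[\cos\psi,\,-\sin\psi]^{\top}$. The tangent vector is $T=[-\sin\phi,\,-\cos\phi]^{\top}$, so the projected vector field equals $(T^{\top}A(t)U)\,T$. Using $R(t)^{\top}T=[-\sin\psi,\,-\cos\psi]^{\top}$, one gets $T^{\top}AU=-\tfrac{\lambda_1-\lambda_2}{2}\sin 2\psi$. With $g:=\lambda_1-\lambda_2>0$ this yields the autonomous scalar ODE
\begin{align*}
\frac{d\psi}{dt} = -\frac{g}{2\varepsilon}\sin 2\psi-\omega .
\end{align*}
Also $\|U(t)-\psi_1(A(t))\|=2|\sin(\psi/2)|$.

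\textbf{Step 3: the equilibrium and its invariant interval.} For $\varepsilon<g/(2\omega)$, the equation has an equilibrium $\psi^*\in(-\pi/4,0)$ with $\sin 2\psi^*=-2\varepsilon\omega/g$, so $|\psi^*|=O(\varepsilon\omega/g)$. On $(-\pi/4,\pi/4)$ the map $\psi\mapsto\sin 2\psi$ is increasing. Therefore $\dot\psi<0$ for $\psi\in(\psi^*,\pi/4)$ and $\dot\psi>0$ for $\psi\in(-\pi/4,\psi^*)$. Consequently every solution starting in $(-\pi/4,\pi/4)$ moves monotonically toward $\psi^*$ and never leaves, which gives
\begin{align*}
|\psi(t)|\le\max\{|\psi(t_0)|,|\psi^*|\}\quad\text{for all } t\ge t_0 .
\end{align*}

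\textbf{Step 4: choosing $\varepsilon$.} The initial condition gives $2|\sin(\psi(t_0)/2)|<\delta<\bar\delta/2<1/2$. This forces $\psi(t_0)$ (mod $2\pi$) into a small interval around $0$, contained in $(-\pi/4,\pi/4)$, and it excludes the antipodal branch $\psi\approx\pi$. Now pick $\varepsilon\in(0,1]$ with $\varepsilon<g/(2\omega)$ and small enough that $2\sin(|\psi^*|/2)<\bar\delta$. Since $|\psi^*|\to 0$ as $\varepsilon\to 0$, such an $\varepsilon$ exists. Then
\begin{align*}
\|U(t)-\psi_1(A(t))\| \le \max\{\delta,\,2\sin(|\psi^*|/2)\} < \bar\delta ,
\end{align*}
which is the claim.

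\textbf{Expected obstacle.} The main difficulty is Step 2: getting the sign and orientation conventions of $R(t)$ right, so that the drift term is exactly $-\omega$ and the stable equilibrium lies near $0$. Once the scalar ODE is in hand, Steps 3 and 4 are elementary. One subtlety to handle carefully is that the $\delta<\bar\delta/2$ hypothesis is used only to place $\psi(t_0)$ inside the basin $(-\pi/4,\pi/4)$.
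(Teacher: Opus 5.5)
Your proof is correct, and it takes a genuinely different route from the paper. The paper also passes to the frame $K=R(t)^{\top}U$. It then linearizes about $\bar K=[1\ \ 0]^{\top}$, solves the $K_{\delta,2}$-equation by variation of constants, bounds the forcing term by $\varepsilon\omega(1+c_{\infty})/(\lambda_1-\lambda_2)$, and discards the $O(\delta^2)$ remainder. Its choice $\varepsilon<\min\{\delta,\delta(\lambda_1-\lambda_2)/\omega(1+c_{\infty})\}$ makes the \emph{sum} of the decaying initial term and the forcing term stay below $2\delta<\bar\delta$. You instead use two facts: for $(r,n)=(1,2)$ the flow lives on a circle, and the frame generator $\dot R^{\top}R=\omega\begin{bmatrix}0&-1\\1&0\end{bmatrix}$ is constant. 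The problem therefore reduces exactly to the autonomous scalar equation $\dot\psi=-\frac{\lambda_1-\lambda_2}{2\varepsilon}(\sin2\psi-\sin2\psi^{*})$. Incidentally, this shows the $\cos(2\omega t)$ factor in the paper's rotating-frame equation should be $1$; that error is harmless there, since only $|\cos(2\omega t)|\le 1$ is used. Your monotonicity argument then gives a bound by a \emph{maximum} rather than a sum. So the hypothesis $\delta<\bar\delta/2$ is needed only to place $\psi(t_0)$ in $(-\pi/4,\pi/4)$.

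Your route buys three things:
\begin{itemize}
\item full rigor, since no higher-order terms are neglected and the conclusion is not merely a first-order estimate;
\item an explicit steady-state lag $\sin2\psi^{*}=-2\varepsilon\omega/(\lambda_1-\lambda_2)$, consistent with the paper's $O(\varepsilon\omega/(\lambda_1-\lambda_2))$ heuristic;
\item an $\varepsilon$ that depends only on $\bar\delta$, $\omega$ and $\lambda_1-\lambda_2$, not on $\delta$.
\end{itemize}
The paper's route buys a template. Its linearize-and-variation-of-constants estimate is exactly what is reused in Theorem~\ref{thm:time_varying_tracking_performance} for general $(r,n)$, where no scalar angle reduction is available.
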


    \begin{proof}
    To move on the rotating frame, we consider the differential equation of $K(t) := R(t) ^{\top} U(t)$, 
    \begin{align*}
    &\varepsilon \frac{d}{dt} K(t) 
    \\
    =&
    \varepsilon \omega 
    \begin{bmatrix}
    0 & - \cos (2\omega t) \\ \cos (2\omega t) & 0
    \end{bmatrix} K(t)
    \\
    &
    +
    (I_{2} - K(t) K(t)^{\top}) \Lambda K(t)
    .
    \end{align*}
    Hence, the first term can be regarded as a perturbation or small modeling error. 
    Notice that if $\omega =0 $, then ${\mathcal{U}}_{1} := {\mathcal{U}}_{1}({\mathrm{diag}}(\lambda_{1}, \lambda _{2})) = \{ \begin{bmatrix} \pm 1 & 0 \end{bmatrix} ^{\top} \}$ is the stable equilibrium set. 
    However, ${\mathcal{U}}_{1}$ is no longer the equilibrium set when $R(t)$ is time-varying. 
    Since the second term of the right-hand side of the above equation forces the solution to ${\mathcal{U}}_{1}$, consider the linearization around an element of ${\mathcal{U}}_{1}$. 
    Let $K(t) = \bar{K} + K_{\delta}(t)$, where $\bar{K} \in {\mathcal{U}}_{1}$ and $K_{\delta}(t_{0}) \in {\mathcal{B}}_{\delta} := \{ X \in {\mathbb{R}}^{2\times 1} \ | \ \| X \| _{\rm ind} < \delta \ \}$ for a small positive $\delta \in (0, \bar{\delta} /2)$ at some $t_{0}\geq 0$. 
    Then, for $t\geq t_{0}$, 
    \begin{align*}
    & \varepsilon \frac{d}{dt} K_{\delta}(t) 
    \\
    =&
    -\varepsilon \omega \begin{bmatrix}
    0 & - \cos (2\omega t) \\ \cos (2\omega t) & 0
    \end{bmatrix} ( \bar{K} + K_{\delta}(t) )
    \\
    & +
    ( I_{2} - \bar{K} \bar{K}^{\top} ) \Lambda K_{\delta}(t)
    \\
    &
    -
    (\bar{K} K_{\delta} (t)^{\top} + K_{\delta}(t) \bar{K}^{\top} ) \Lambda \bar{K}
    + O (\delta ^2) 
    \end{align*}
    and without loss of generality, take $\bar{K} = \begin{bmatrix} 1 & 0 \end{bmatrix} ^{\top}$.  
    Then, the element-wise equation is 
    \begin{align*}
    \varepsilon \frac{d}{dt} K_{\delta , 2}(t) =&
     \varepsilon \omega (1+K_{\delta , 1}(t))
     \cos (2\omega t)
     \\ & 
    -(\lambda _{1} - \lambda _{2})
     K_{\delta,2}(t)
    + O (\delta ^2) 
    \end{align*}
    and its solution is
    \begin{align*}
    K_{\delta,2} (t) = & \exp \left( -\frac{\lambda _{1} - \lambda _{2}}{\varepsilon}(t-t_{0}) \right) K_{\delta,2}(t_{0})
    \\ &
    +
    \omega \int _{t_{0}}^{t} \exp \left( -\frac{\lambda _{1} - \lambda _{2}}{\varepsilon} (t- \tau ) \right) 
    \\ & \hspace{1cm}
    \times (1+ K_{\delta ,1}(\tau )) \cos (2\omega \tau ) d\tau 
    \\ & 
    + O (\delta ^2  ) 
    .
    \end{align*}
    Therefore, 
    \begin{align*}
    | K_{\delta,2} (t) | \leq &  \exp \left( -\frac{\lambda _{1} - \lambda _{2}}{\varepsilon}(t-t_{0}) \right) |K_{\delta,2}(t_{0}) |
    \\ & 
    +
    \omega (1 + c(t) ) \int _{t_{0}}^{t} \exp \left( -\frac{\lambda _{1} - \lambda _{2}}{\varepsilon}\tau \right) d\tau 
    \\ & + O (\delta ^2  ) 
    \\
    \leq &
    \exp \left( -\frac{\lambda _{1} - \lambda _{2}}{\varepsilon}(t-t_{0}) \right) \delta 
    \\ & 
    +
    \frac{(1 + c(t) )\varepsilon \omega }{\lambda _{1} - \lambda _{2}} 
    \Bigg{(} 
    \exp \left( -\frac{\lambda _{1} - \lambda _{2}}{\varepsilon}t_{0} \right)
    \\ & \hspace{2cm}
    -
    \exp \left( -\frac{\lambda _{1} - \lambda _{2}}{\varepsilon}t \right)
    \Bigg{)}
    \\ & 
    + O (\delta ^2  ) 
    \\
    \leq &
    \exp \Bigg{(} -\frac{\lambda _{1} 
    - \lambda _{2}}{\varepsilon}(t-t_{0}) \Bigg{)} \delta 
    \\ &
    +
    \frac{(1+c(t))\varepsilon \omega }{\lambda _{1} - \lambda _{2}} 
    + O (\delta ^2  ) ,
    \end{align*}
    where $c(t) := \max _{\tau \in [t_{0},t]} |K_{\delta ,1}(\tau)|$. 
    Notice that $c_{\infty} := \sup _{t\geq t_{0}} c(t) < \infty$ because $\Psi(A(t))=R(t)$ is bounded and $U(t) \in {\mathrm{St}}(1,2)$. 
    Therefore, if for any $\delta >0$, taking $\varepsilon  \in (0,1]$ such that $\varepsilon  < \min \{ \delta , \delta (\lambda _{1} - \lambda _{2})/\omega (1+c_{\infty}) \}$ ensures the solution $ K_{\delta}(t) \in {\mathcal{B}}_{\bar{\delta}}$ for $t\geq t_{0}$. 
    The dominant mode is then $R(t) \bar{K} = \pm \begin{bmatrix} \cos (\omega t) & -\sin (\omega t) \end{bmatrix} ^{\top}$ and the solution remains near the dominant mode. 
    \end{proof}

    The proof of Proposition \ref{prop:rotating_A} gives an insight into how to choose $\varepsilon$ for tracking dominant modes. 
    To track the dominant modes by time-varying Oja flow \eqref{eq:time-varying_Oja_flow}, the acceptance neighborhood ${\mathcal{N}}_{\delta} (\bar{U}(t))$ of $\bar{U}(t) \in {\mathcal{U}}_{r}(A(t))$ for $\delta >0$, $\| dA(t) /dt \| _{\rm ind}$, and the convergence rate ${\mathrm{Re}}(\lambda _{r}(A(t)) - \lambda _{r+1}(A(t)))$ are keys to determine $\varepsilon $.

    \begin{thm}\label{thm:time_varying_tracking_performance}
    Let $A(t) = \Psi (t) \Lambda (t) \Psi (t) ^{-1} $, where $\Psi (t) :=[ \psi _{1} (A(t)),\dots , \psi _{n}(A(t)) ]$, $\Lambda (t) := \mbox{block-diag}[\Lambda _{r}(t) , \Lambda _{\perp}(t)]$, with Jordan form matrices $\Lambda _{r}(t) \in {\mathbb{C}}^{r \times r}$, and $\Lambda _{\perp} (t) \in {\mathbb{C}}^{(n-r) \times (n-r)}$. 
    Assume the following conditions hold. 
    \begin{enumerate}
    \item $\Lambda (t)$ is diagonal for $t\geq t_{0}$. 
    
    \item $\Lambda (t)$ is continuous in $t\geq t_{0}$ and ${\mathrm{Re}}(\lambda _{r} (\Lambda _{r}(t)) - \lambda _{1}(\Lambda _{\perp} (t))) \geq c_{1}$ for any $t \geq t_{0}$ for some $c_{1} >0$. 
    
    \item $\Psi (t)$ and its inverse $\Psi (t)^{-1}$ and their derivatives are continuous and bounded.  
    
    \item $\Psi (t)^{\dagger} \Psi (t)$ is a constant matrix. 
    
    \item $\Psi (t)^{-1} {\mathcal{U}}_{r}(A(t)) $ is time invariant. 
    
    \end{enumerate}
    Then, there exists $\varepsilon >0$ such that 
    $U(t)$ starting from $U(t_{0}) \in {\mathcal{N}}_{\delta} (\bar{U}(t_{0}))$, where $\bar{U}(t_{0}) \in {\mathcal{U}}_{r}(A(t_{0}))$ and $\delta >0$,  remains in ${\mathcal{N}}_{\delta} (\bar{U}(t))$. 
    \end{thm}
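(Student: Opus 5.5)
We mimic the proof of Proposition~\ref{prop:rotating_A}: pass to the moving eigenframe, treat the frame motion as an $O(\varepsilon)$ perturbation, and use the spectral gap $c_1$ to absorb it, with time-scale separation through a small $\varepsilon$.

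\textbf{Change of frame.} Set $K(t) := \Psi(t)^{-1}U(t)$. Differentiating $\Psi K = U$ and using $A\Psi = \Psi\Lambda$, we get
\begin{align*}
\varepsilon \frac{d}{dt}K(t) = & -\varepsilon\, \Psi(t)^{-1}\dot{\Psi}(t) K(t) \\
& + \big(I_n - K(t)K(t)^{\top}\Psi(t)^{\top}\Psi(t)\big)\Lambda(t) K(t) .
\end{align*}
By assumption 4, $G:=\Psi^{\dagger}\Psi$ is constant. Restricting to real frames, or rewriting the projector with $G$, the nonlinear part becomes the autonomous-in-structure Oja vector field $(I_n - KK^{\dagger}G)\Lambda(t)K$. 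By assumption 5, its zero set $\mathcal{K}:=\Psi(t)^{-1}\mathcal{U}_r(A(t))$ is a fixed set. We fix $\bar K\in\mathcal{K}$, which we may normalize as $[K_r^{\top}\ 0]^{\top}$. Then $\bar U(t) := \Psi(t)\bar K \in \mathcal{U}_r(A(t))$, and we need $\|\Psi(t)K_\delta(t)\|_{\rm ind}<\delta$ for $K_\delta := K-\bar K$. By assumption 3, $\|\Psi\|$, $\|\Psi^{-1}\|$ and $\|\Psi^{-1}\dot\Psi\|\le M$ are uniformly bounded. It therefore suffices to keep $\|K_\delta\|$ below $\delta/\sup\|\Psi\|$. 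The same $\delta$ appears in the hypothesis and the conclusion, so we will need invariance of a sublevel set rather than only an ultimate bound. The initial condition is in the $\delta$-ball in $U$-coordinates, and I will work with a weighted norm adapted to $\Psi$. Concretely, I use the quadratic form $\|\Psi K_\delta\|^2 = K_\delta^{\dagger}GK_\delta$, which is exactly the quantity to control, since $G$ is constant.

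\textbf{Linearization and dissipation.} As in Proposition~\ref{prop:rotating_A}, we expand around $\bar K$. Since $\Lambda(t)$ is diagonal (assumption 1), the linear part splits into tangential components along $\mathcal{K}$ and the transverse block $K_{\delta,\perp}\in\mathbb{C}^{(n-r)\times r}$, which satisfies
\begin{align*}
\varepsilon\dot K_{\delta,\perp} = \Lambda_\perp(t)K_{\delta,\perp} - K_{\delta,\perp}\tilde\Lambda_r(t) - \varepsilon\, [\Psi^{-1}\dot\Psi(\bar K + K_\delta)]_\perp + O(\|K_\delta\|^2),
\end{align*}
where $\tilde\Lambda_r$ is similar to $\Lambda_r$ through $K_r$. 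Because $\Lambda$ is diagonal and continuous (assumption 2), the Sylvester operator $X\mapsto \Lambda_\perp X - X\tilde\Lambda_r$ has real spectrum $\le -c_1$ uniformly. Applying the Lyapunov function $V = \|K_{\delta,\perp}\|^2$ (in the $G$-weighted form) then gives
\[
\varepsilon \dot V \le -2c_1 V + 2\varepsilon M(1+\|K_\delta\|)\sqrt V + C V^{3/2}.
\]
The tangential components, which are motion within $\mathcal{K}$, are not penalized: they only change which $\bar K\in\mathcal{K}$ is nearby. I would therefore rechoose $\bar K(t)$ as the nearest point in $\mathcal{K}$, which is legitimate since the statement only asks for $\bar U(t)\in\mathcal{U}_r(A(t))$.

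\textbf{Choice of $\varepsilon$ and the obstacle.} On the boundary $\sqrt V = \delta'$, with $\delta$ small enough that $C\delta' \le c_1$, we have $\dot V<0$ as soon as $\varepsilon < c_1\delta'/(2M(1+\delta'))$. This gives forward invariance of the $\delta'$-ball and hence of $\mathcal{N}_\delta(\bar U(t))$. Proposition~\ref{prop:time-varying_positive_part_convergence} keeps $U$ (approximately) on the Stiefel manifold, so that $K$ stays bounded; this replaces the constant $c_\infty$ from Proposition~\ref{prop:rotating_A}.

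The main obstacle is making the $O(\delta^2)$ remainder and the non-normality rigorous. The transformation $\Psi$ is not orthogonal, so the Euclidean ball in $U$ is not a ball in $K$. The linearized operator is also not symmetric in the $G$-inner product, since $\tilde\Lambda_r$ involves $K_r$. I would handle this first by choosing the Lyapunov weight as the solution of a Sylvester-type Lyapunov equation with uniformly bounded coefficients, using assumptions 2 and 3. I would then accept that the invariant neighborhood radius is $\delta$ up to the condition number of $\Psi$, adjusting the initial-ball radius as in Proposition~\ref{prop:rotating_A}, where $\delta<\bar\delta/2$.
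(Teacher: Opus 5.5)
Your outline agrees with the paper's up to the linearization: the frame change $K=\Psi^{-1}U$, the fixed set $\Psi^{-1}\mathcal{U}_r(A(t))$ from Assumption 5, and the expansion around $\bar K=[\bar K_r^{\dagger}\ 0]^{\dagger}$. The genuine gap is that the estimate carrying the whole argument is not established. The inequality $\varepsilon\dot V\le-2c_1V+\cdots$ for $V=\|K_{\delta,\perp}\|^2$ does not follow from the spectrum of $X\mapsto\Lambda_\perp X-X\tilde\Lambda_r$. From the equilibrium condition, $\tilde\Lambda_r=\bar K^{\dagger}\Psi^{\dagger}\Psi\Lambda\bar K=\bar K_r^{-1}\Lambda_r\bar K_r$. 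This matrix is non-normal whenever the Gram matrix of the dominant eigenvectors does not commute with $\Lambda_r$. In the Frobenius norm, or under any left weighting such as $K_\delta^{\dagger}\Psi^{\dagger}\Psi K_\delta$, the decay rate is then set by the numerical range of $\tilde\Lambda_r$, not by its eigenvalues. For ill-conditioned $\bar K_r$ that rate can even be positive.

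You flag this obstacle, but your proposed repair, a weight obtained from a Lyapunov equation, is not carried out. A weight computed at frozen $t$ would also inherit the time dependence of $\Lambda(t)$, and its derivative is not controlled, since Assumption 2 gives only continuity. The missing idea, which is the paper's key step, is that the similarity is by the \emph{constant} matrix $\bar K_r$, so it can simply be removed. Set $L'_\perp:=L_\perp\bar K_r^{-1}$. The transverse equation becomes $\varepsilon\dot L'_\perp=\Lambda_\perp L'_\perp-L'_\perp\Lambda_r+O(\varepsilon)$. Both coefficients are diagonal (Assumption 1), so the linear part acts entrywise as multiplication by $\lambda_{r+i}(t)-\lambda_j(t)$, whose real part is $\le-c_1$ (Assumption 2). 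With the time-independent weight $V=\|L_\perp\bar K_r^{-1}\|_F^2$ your inequality then holds. The forcing constant carries $\|\bar K\|\,\|\bar K_r^{-1}\|$ in place of your $1$.

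From that point the two routes differ. The paper writes the variation-of-constants formula and bounds the transition matrices through diagonal comparison systems. This gives $\|L'_\perp(t)\|\le e^{(2c_2-c_1/\varepsilon)(t-t_0)}\delta+O(\varepsilon)$, and hence only $\|L'_\perp\|<2\delta$. Your boundary-crossing argument, once the weight is fixed, gives forward invariance of a fixed sublevel set. That is closer to the ``same $\delta$'' conclusion, up to conversion constants involving $\Psi$ and $\bar K_r$, which, as you note, cannot be avoided.

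Your re-selection of $\bar K(t)$ as the nearest point of $\Psi^{-1}\mathcal{U}_r(A(t))$ is a real improvement, not a convenience. Take $r=2$, $\Lambda=\mathrm{diag}(2,1,0)$, and $\Psi(t)$ a rotation of the $(e_1,e_2)$-plane. All five assumptions hold, and $U$ started on $\mathrm{span}(e_1,e_2)$ is stationary. Yet $\Psi(t)\bar K$ with a fixed $\bar K$ drifts an $O(1)$ distance away regardless of $\varepsilon$. The paper's proof, which controls only $L'_\perp$, is silent on this drift. The same re-selection is what justifies your remainder $CV^{3/2}$. The exact transverse nonlinearity is $-L_\perp(K^{\dagger}\Psi^{\dagger}\Psi\Lambda K-\bar K^{\dagger}\Psi^{\dagger}\Psi\Lambda\bar K)=O(\|L_\perp\|\,\|L\|)$, so it needs $\|L_r\|$ to be slaved to $\|L_\perp\|$. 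Note, however, that a moving $\bar K$ makes the weight $\bar K_r(t)^{-1}$ time dependent. This produces a term like the paper's $D(t)=\bar K_r\frac{d}{dt}\bar K_r^{-1}$, which you still have to bound.
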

    
    The fourth assumption is a generalization of the case of a rotation matrix; it allows that $\Psi (t)$ consists of a multiplication between a time-varying rotation matrix $R(t)$ and a non-singular constant matrix $\Psi ^{\prime}$, i.e., $\Psi (t) = R(t) \Psi ^{\prime}$. 
    
    \begin{proof}
    Let $K (t) = \Psi (t)^{-1} U(t)$ and consider its time evolution. 
	\begin{align*}
	\varepsilon \frac{d}{dt} K(t) 
	=& - \varepsilon \Psi (t) ^{-1} \frac{d}{dt}\Psi (t) K(t) 
	\\ & 
	+ (I_{n} - K(t)K(t)^{\dagger} \Psi (t)^{\dagger} \Psi (t)  ) \Lambda (t) K(t).
	\end{align*}
    Notice that since $\Psi (t)^{-1}$ is bounded and $U(t) \in {\mathrm{St}}(r,n)$, $K(t)$ is bounded too. 
	Next, we consider the perturbed trajectory from $\Psi (t) ^{-1} {\mathcal{U}}_{r}(A(t))$. 
	Notice that ${\mathcal{U}}_{r}(A(t))$ is in a time varying subspace of ${\mathbb{R}}^{n\times r}$, while $\Psi (t) ^{-1} {\mathcal{U}}_{r}(A(t))$ is in a {\em fixed} subspace in ${\mathbb{C}}^{n\times r}$. 
	This means that with the assumption 5, $\bar{K} \in \Psi (t) ^{-1} {\mathcal{U}}_{r}(A(t))$ is of the form $\bar{K} = \begin{bmatrix}  \bar{K}_{r}^{\dagger}  & 0_{n-r,r}^{\dagger}  \end{bmatrix} ^{\dagger}$ with a non-singular matrix $\bar{K}_{r} \in {\mathbb{C}}^{r\times r}$ for $t\geq 0$. 
	From the assumption 4, 
	\begin{align*}
	\frac{d}{dt} ( \bar{K}^{\dagger} \Psi (t) ^{\dagger} \Psi (t) \bar{K})
	=
	0_{r,r}
	.
	\end{align*}
	From the assumption 3, there exists 
	\begin{align*}c_{2} := \sup _{t\geq t_{0}} \|  \Psi (t) ^{-1}  \frac{d}{dt}\Psi (t) \|  _{\rm ind}. 
	\end{align*}
	Consider $K(t) = \bar{K} + L (t) $ where $\| L (t_{0}) \| _{\rm ind} < \delta ^{\prime} := \delta /c_{2}$. 
	Then, 
	\begin{align*}
	& \varepsilon \frac{d}{dt} (\bar{K} + L(t) ) 
	\\
	= &  -\varepsilon \Psi (t) ^{-1} \frac{d\Psi (t)}{dt} ( \bar{K} +L(t)  ) 
	\\ & 
	- L(t)\bar{K}^{\dagger} \Psi (t)^{\dagger} \Psi (t)   \Lambda (t) \bar{K}
	\\ &
	- \bar{K} L(t)^{\dagger} \Psi (t)^{\dagger} \Psi (t)  \Lambda (t) \bar{K} 
	\\ & 
	+ ( I_{n} - \bar{K} \bar{K} ^{\dagger} \Psi (t)^{\dagger} \Psi (t)  ) \Lambda (t) L(t) 
	\\
	&
	+ O(\| L(t) \| _{\rm ind}^{2}). 
	\end{align*}
	Ignoring the higher terms, the linearized equation becomes 
	\begin{align*}
	\varepsilon \frac{d}{dt} L (t)  
	=&
	- \varepsilon \Psi (t) ^{-1} \frac{d\Psi (t)}{dt} (\bar{K} + L (t) )
	\\
	& + (I_{n} - \bar{K} \bar{K}^{\dagger} \Psi (t) ^{\dagger} \Psi (t) ) \Lambda (t) L(t) 
	\\
	& + 
	(\bar{K}  L(t)^{\dagger} + L (t) \bar{K}^{\dagger} )
	\\ & \times  \Psi (t)^{\dagger} \Psi (t) \Lambda (t) \bar{K}
	\end{align*}
	For $L(t) = \begin{bmatrix} L_{r}(t)^{\dagger}  & L_{\perp}(t)^{\dagger}  \end{bmatrix} ^{\dagger}$, if $L_{\perp} (t) \in {\mathbb{C}}^{(n-r)\times r}$ becomes zero, then $U(t)$ is in ${\mathcal{U}}_{r}(A(t))$. 
	In the rest of the proof, we establish how we can attenuate $L_{\perp}(t)$ by adjusting $\varepsilon \in (0,1]$. 

	Let $L_{\perp}^{\prime}(t) := L_{\perp} (t) \bar{K}_{r}^{-1}$.  
	Similarly with the proof of Theorem 1 of \cite{TsuzukiOhki2024}, the time evolution equation of $L_{\perp}^{\prime} (t) $ is then 
	\begin{align*}
	\varepsilon \frac{d}{dt} L_{\perp}^{\prime} (t) =& 
	-C_{1}(t) ( \bar{K}_{r} +  L_{r}^{\prime}(t))
	-
	\varepsilon C_{2} (t) L_{\perp}^{\prime}(t)
	\\ & 
	+
	\Lambda _{\perp} (t) L_{\perp}^{\prime} (t) 
	-
	 L_{\perp}^{\prime} (t) \Lambda _{r}(t) , 
	\end{align*}
	where 
	\begin{align*}
	C_{1}(t)  :=& \begin{bmatrix} 0_{n-r,n-r} \\ I_{n-r} \end{bmatrix} ^{\top} 
	 \Psi (t) ^{-1} \frac{d}{dt}\Psi (t)\begin{bmatrix} I_{r} \\ 0_{n-r,r} \end{bmatrix}, 
	 \\
	 C_{2}(t)  := &\begin{bmatrix} 0_{n-r,n-r} \\ I_{n-r} \end{bmatrix} ^{\top}
	 \Psi (t) ^{-1} \frac{d}{dt}\Psi (t)
	 \begin{bmatrix} 0_{r,n-r} \\ I_{n-r} \end{bmatrix}, 
	 \end{align*}
	 $D(t) := \bar{K}_{r}(t) \frac{d}{dt}\bar{K}_{r}(t)^{-1} $ and 
	$\bar{U}(t) \in {\mathcal{U}}_{r}(A(t))$. 
	Then, 
	\begin{align*}
	L_{\perp}^{\prime} (t) = & 
	\Phi _{C_{2} + \Lambda _{\perp}/ \varepsilon }(t) L_{\perp}^{\prime}(t_{0}) \Phi _{-D -\Lambda _{r} /\varepsilon }(t) 
	\\ & 
	- 
	\int _{t_{0}}^{t} \Phi _{C_{2} + \Lambda _{\perp}/ \varepsilon }(t-\tau) 
	(\bar{K}_{r} + C_{1}(\tau )) 
	\\ & \hspace{2cm}
	\times \Phi _{-D -\Lambda _{r} /\varepsilon }(t-\tau )  d\tau . 
	\end{align*}
	Note that $C_{1}(\bullet )$, $C_{2}(\bullet )$, and $D(\bullet )$ are bounded by $c_{2}$. 
	Since the maximal singular value of a square matrix is larger than or equal to the maximum eigenvalue of the symmetrized matrix, 
	\begin{align*}
	&\frac{d}{dt}  ( \Phi _{C_{2} + \Lambda _{\perp}/ \varepsilon }(t) ^{\dagger} \Phi _{C_{2} + \Lambda _{\perp}/ \varepsilon }(t) )
	\\
	=& \Phi _{C_{2} + \Lambda _{\perp}/ \varepsilon }(t) ^{\dagger} 
	(C_{2}(t)+C_{2}(t) ^{\dagger} + 2 \Lambda _{\perp}(t)/ \varepsilon )
	 \\
	 & \quad \times \Phi _{C_{2} + \Lambda _{\perp}/ \varepsilon }(t)
	 \\
	 \leq & 
	 2 \Phi _{C_{2} + \Lambda _{\perp}/ \varepsilon }(t) ^{\dagger} 
	(c_{2} I_{n-r} +  \Lambda _{\perp}(t)/ \varepsilon )
	 \Phi _{C_{2} + \Lambda _{\perp}/ \varepsilon }(t)
	 \\
	 \leq &
	 2 \Phi _{c_{2}I_{n-r} + \Lambda _{\perp}/ \varepsilon }(t) ^{\dagger} 
	(c_{2} I_{n-r} +  \Lambda _{\perp}(t)/ \varepsilon )
	\\ & \quad \times 
	 \Phi _{c_{2}I_{n-r} + \Lambda _{\perp}/ \varepsilon }(t)
	,
	\end{align*}
	where $\Phi _{c_{2}I_{n-r} + \Lambda _{\perp}(t)/ \varepsilon }(t)$ has the following explicit representation; 
	\begin{align*}
	& \Phi _{c_{2}I_{n-r} + \Lambda _{\perp}/ \varepsilon }(t)
	\\
	=&
	\exp \left(  c_{2}I_{n-r}(t-t_{0}) + \int _{t_{0}}^{t} \Lambda _{\perp}(\tau ) d\tau / \varepsilon  \right)
	.
	\end{align*}
	Notice that $\Phi _{c_{2}I_{n-r} + \Lambda _{\perp}/ \varepsilon }(t)$ is a diagonal matrix. 
	Therefore, the following holds;
	\begin{align*}
	\| \Phi _{C_{2} + \Lambda _{\perp}/ \varepsilon }(t) \| _{\rm ind} 
	= 
	{\mathrm{e}}^{c_{2}(t-t_{0}) + \int _{t_{0}}^{t} {\mathrm{Re}}(\lambda _{r+1}(\tau )) d\tau  / \varepsilon) }.
	\end{align*}
	Similarly, we have 
	\begin{align*}
	\| \Phi _{-D -\Lambda _{r} /\varepsilon }(t) \| _{\rm ind} 
	=
	{\mathrm{e}}^{c_{2}(t-t_{0}) -\int _{t_{0}}^{t} {\mathrm{Re}} (\lambda _{r}(\tau )) d\tau  / \varepsilon) }. 
	\end{align*}
	Taking the induced norm for $L_{\perp}^{\prime}(t)$ and using the arguments of Proposition \ref{prop:rotating_A} gives 
	\begin{align*}
	& \| L_{\perp}^{\prime} (t) \| _{\rm ind}
	\\
	\leq & 
	\| L_{\perp}^{\prime}(t_{0}) \| _{\rm ind} 
	 {\mathrm{e}}^{2c_{2}(t-t_{0}) + \int _{t_{0}}^{t} {\mathrm{Re}}( \lambda _{r+1}(\tau ) - \lambda _{r}(\tau ) )d\tau  / \varepsilon) }
	\\ &
	+ ( \| \bar{K}_{r} \| _{\rm ind} + c(t) ) 
	\\ & \times 
	 \int _{t_{0}}^{t} {\mathrm{e}}^{2c_{2}(\tau -t_{0}) + \int _{t_{0}}^{\tau } {\mathrm{Re}}( \lambda _{r+1}(s ) - \lambda _{r}(s ) )ds  / \varepsilon) }
	d\tau
	\\
	\leq &
	 {\mathrm{e}}^{(2c_{2} - c_{1} / \varepsilon )(t-t_{0})  } \delta 
	\\ &
	+ ( \| \bar{K}_{r} \| _{\rm ind} + c(t) ) 
	 \int _{0}^{t - t_{0}} {\mathrm{e}}^{( 2c_{2} - c_{1}/\varepsilon ) \tau   } d\tau 
	 \\
	 =&
	 {\mathrm{e}}^{(2c_{2} - c_{1} / \varepsilon )(t-t_{0})  } \delta 
	\\ &
	+ \frac{\| \bar{K}_{r} \| _{\rm ind} + c(t)  }{c_{1}/\varepsilon - 2c_{2}}
	(1 - {\mathrm{e}}^{(2c_{2} - c_{1} / \varepsilon )(t-t_{0})  })
	\end{align*}
	where $c(t) := \max _{\tau \in [t_{0} ,t ]} \| C_{1} (\tau ) \|_{\rm ind} \leq c_2$.  
	If we take $\varepsilon < \min \{ c_{1}/2c_{2} , \ \delta c_{1} / ( \| \bar{K}_{r} \| _{\rm ind} + (1+2c_{2}) \delta ) \} $, $\| L_{\perp}^{\prime} \| _{\rm ind} < 2\delta < \bar{\delta}$. 
	\end{proof}
	
	Therefore, taking a small $\varepsilon >0$ enables the time-varying Oja flow \eqref{eq:time-varying_Oja_flow} to track the dominant subspaces.   
	
	The assumptions in Theorem \ref{thm:time_varying_tracking_performance} are restrictive and do not cover arbitrary LTV systems. Rather, they characterize a tractable class of smoothly varying systems for which dominant-subspace tracking can be analyzed explicitly. Extending the result beyond this class is left for future work.
	
	A numerical example to track a dominant subspace by the time-varying Oja flow \eqref{eq:time-varying_Oja_flow} with $A(t) = R(t) A_{0} R(t)^{\top} \in {\mathbb{R}}^{3\times 3}$, where
	\begin{align*}
	A_{0} :=& \begin{bmatrix}
	1 & 1 & 2 \\ 0 & 0 & 1 \\ 0 & 0 & -1
	\end{bmatrix}
	,\\ 
	R(t) :=&
	\begin{bmatrix}
	\cos (t) & -\sin(t) & 0 \\ \sin (t) & \cos(t) & 0 \\ 0 & 0 & 1
	\end{bmatrix}
	\\ & \times 
	\begin{bmatrix}
	1 & 0 & 0
	\\
	0 & \cos (t) & -\sin(t) \\ 0 & \sin (t) & \cos(t) 
	\end{bmatrix},
	\end{align*}
	 is shown in Figure \ref{fig_example_time_varying_semilog} in logarithmic scale. 
	 Notice that the matrix $A(t)$ satisfies the conditions of Theorem \ref{thm:time_varying_tracking_performance}.  
	Since $R(t)$ is a rotating matrix, ${\mathcal{U}}_{1}(A(t)) = \{ \pm R(t) \psi _{1} \}$, where $\psi _{1} = \psi _{1}(A_{0})$.  
	Each line shows a distance from the dominant subspace by using $\varepsilon = 1,0.1, 0.01$, respectively. 
	The results show that a small $\varepsilon$ makes $U(t)$ near the dominant subspace.

    	\begin{figure}[!htbp]
    	\centering
    	\includegraphics[keepaspectratio,width=\linewidth]{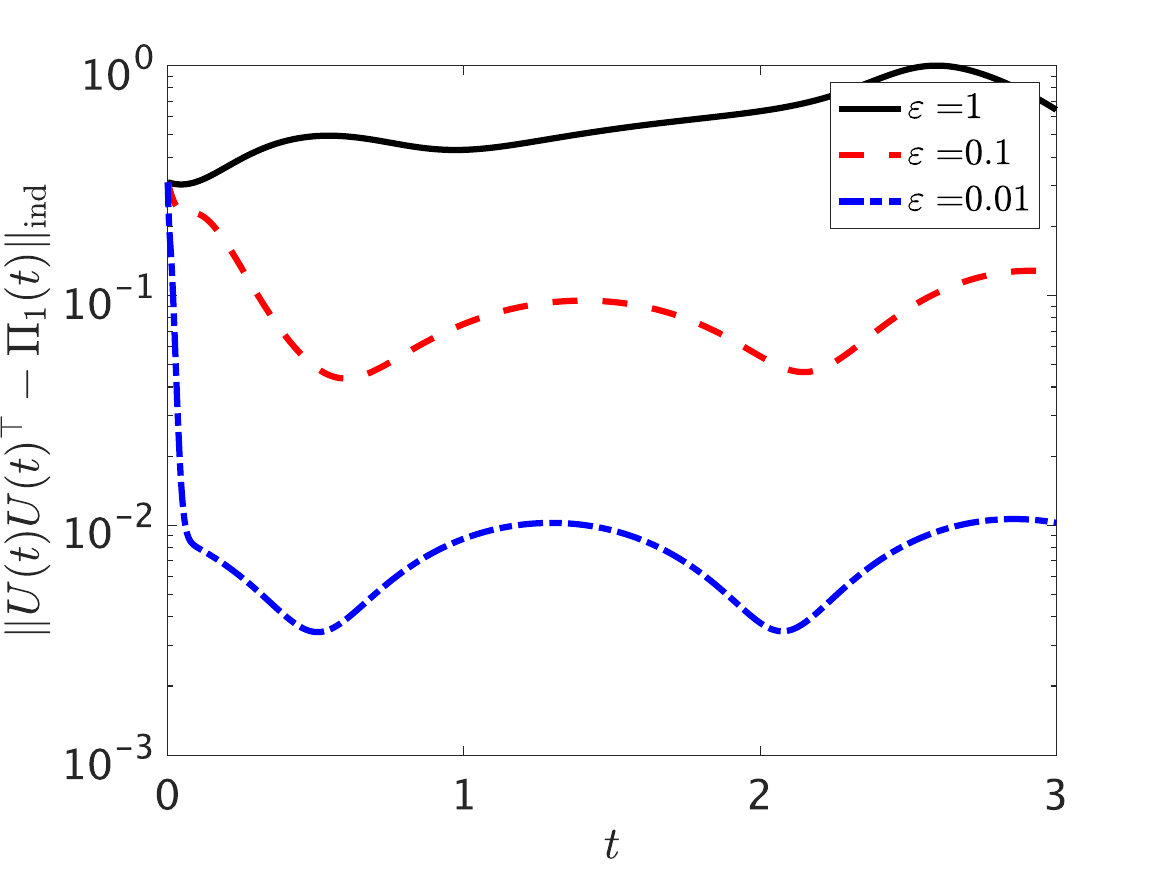}
    	\caption{Plot of $\| U(t) U(t)^{\top} - R(t) \psi _{1} \psi _{1}^{\top}R(t)^{\top} \| _{\rm ind}$ for each time $t$ by Forward Euler method with $A(t)+2I_{3}$ under different convergence parameter $ \varepsilon = 1, 0.1, 0.01$ with $U(0) = (\psi _{1} + \psi _{2} + \psi _{3})/\| \psi _{1} + \psi _{2} + \psi _{3} \|$. }
	\label{fig_example_time_varying_semilog}
	\end{figure}

\section{Numerical experiments}

In order to focus on time-varying systems, we employ an LTV system.  
\begin{align}
dx(t) =& A(t) x(t) dt + Gdw(t), \label{eq:state} 
\\ 
dy(t) =& Cx(t) dt + dv(t), \label{eq:observation}
\end{align}
where $x(t) \in {\mathbb{R}}^{n}$ is the state vector, $y(t) \in \mathbb{R}$ is the measurement output, $w(t) \in \mathbb{R}$ and $v(t) \in \mathbb{R}$ are mutually independent standard Wiener processes. The associated coefficients are the matrices with adequate dimensions. We utilize the following matrices:
\begin{align*}
A (t) =& R(t) A R(t)^{\top}, A = \begin{bmatrix}
0 & 1 & 0 & \dots & 0 \\
\vdots & \ddots & \ddots & & \vdots \\
0 & \dots  & \dots & 0 & 1 \\
1 & \dots &  & 1 & 1
\end{bmatrix}
,\\
B =& \begin{bmatrix} 0 &\dots &0 & 1\end{bmatrix}^{\top} , 
\
C = \begin{bmatrix} 1 & 0 &\dots &0\end{bmatrix},
\end{align*}
and $R(t) \in {\mathbb{R}}^{n\times n}$ is a rotation matrix defined as $R(t) = \exp ( (X - X^{\top}) t/2)$ where $X \in {\mathbb{R}}^{n\times n}$ is a random matrix whose antisymmetric part's maximum singular value is normalized to $10^{-4}$.  
Notice that $A$ is an unstable matrix in Hurwitz sense. 
To the best of the authors' knowledge, there is no solid method to reduce linear time-varying unstable systems. 
Assume that $x(0)$ follows a Gaussian distribution with mean $\mu$ and covariance matrix $P(0)$. 
Then the Kalman-Bucy filter of the system \eqref{eq:state} and \eqref{eq:observation} is 
\begin{align}
d\hat{x} (t) = & A(t) \hat{x}(t) dt + P(t) C^{\top} (dy(t) - C\hat{x}(t) dt),
\\
\frac{d}{dt}P(t) =& A(t) P(t) + P(t) A(t) + GG^{\top} \nonumber 
\\ & \quad - P(t) C^{\top}C P(t),
\end{align}
where $\hat{x}(t) \in {\mathbb{R}}^{n}$ is the conditional expectation of $x(t)$ on the measurement record up to $t\geq 0$, and $P(t) \in {\mathbb{R}}^{n\times n}$ is its conditional covariance matrix. 

We take the $r$ as the number of the unstable eigenvalues of $A$.  
Let $ U_{\varepsilon}(t)$ be the solution of \eqref{eq:time-varying_Oja_flow} and $ V_{\varepsilon}(t)$ be the solution of the time-varying Oja flow for $A(t)^{\top}$ with the initial condition $U_{\varepsilon}(0) = V_{\varepsilon}(0)$. 
Then the low-rank Kalman-Bucy filter is defined as
\begin{align*}
d \tilde{x}_{\varepsilon} (t) =& A(t) \tilde{x}_{\varepsilon}(t) dt 
\nonumber \\ & + U_{\varepsilon}(t) P_{r,\varepsilon}(t) C_{U_{\varepsilon}}(t) ^{\top} 
\nonumber \\ & \quad \times (dy(t) - C \tilde{x}_{\varepsilon}(t)dt),
\\
\frac{d}{dt} P_{r,\varepsilon} (t) =& A_{U_{\varepsilon}}(t) P_{r,\varepsilon} + P_{r,\varepsilon} (t) A_{U_{\varepsilon}}(t)^{\top} 
\nonumber \\ & + G_{U_{\varepsilon}}(t) G_{U_{\varepsilon}}(t) ^{\top} \nonumber \\ &-  P_{r,\varepsilon} (t) C_{U_{\varepsilon}}(t)^{\top} C_{U_{\varepsilon}}(t) P_{r,\varepsilon} (t),
\end{align*}
where $\tilde{x}(t) \in {\mathbb{R}}^{n}$ is the estimated state,  
$P_{r,\varepsilon} (t) \in {\mathbb{R}}^{r\times r}$, $A_{U_{\varepsilon}}(t) := U_{\varepsilon}(t)^{\top} A(t) U_{\varepsilon}(t)$, \\ $G_{U_{\varepsilon}}(t) := (V_{\varepsilon}(t)^{\top}U_{\varepsilon}(t))^{-1} V_{\varepsilon}(t)^{\top} G$, $C_{U_{\varepsilon}}(t) := C U_{\varepsilon}(t)$. 
From the result of our follow-up paper \cite{TsuzukiOhki2025global}, the definition of $G_{U_{\varepsilon}}(t)$ is different from the original proposal of the low-rank Kalman-Bucy filter \cite{yamada2020lowrank}.  
The covariance matrix $P_{\varepsilon}(t) := {\mathbb{E}}[ (x(t) - \tilde{x}_{\varepsilon}(t)) (x(t) - \tilde{x}_{\varepsilon}(t))^{\top} ] $ is obtained from the following differential Lyapunov equation \cite{yamada2020lowrank}. 
\begin{align}
&\frac{d}{dt}P_{\varepsilon}(t) 
\nonumber \\ 
= & (A(t) - U_{\varepsilon}(t) P_{r,\varepsilon} (t)C_{U_{\varepsilon}}(t)^{\top} C ) P_{\varepsilon}(t) 
\nonumber \\
&+
P_{\varepsilon}(t) (A(t) - U_{\varepsilon}(t) P_{r,\varepsilon}(t) C_{U_{\varepsilon}}(t)^{\top} C )
\nonumber \\ &
 + U_{\varepsilon}(t) P_{r,\varepsilon}(t) C_{U_{\varepsilon}}(t)^{\top} C_{U_{\varepsilon}}(t) P_{r,\varepsilon}(t) U_{\varepsilon}(t)^{\top}
 \nonumber \\ &
 + GG^{\top}
\label{eq:covariance_LRKF}
\end{align}
We demonstrate the performance of the low-rank Kalman-Bucy filter with $\varepsilon =1, 0.1, 0.01$. 
For the comparison with the best filter (Kalman-Bucy filter), we plotted the ratio between the trace of $P_{\varepsilon}(t)$ and $P(t)$ for each $\varepsilon$. 
By Euler method with the time step size $dt = 10^{-4}$, Figure \ref{fig_LRKBF_covariance} shows the result for $n=10$. 
Figure \ref{fig_example_tracking_semilog} demonstrates the tracking error of the time-varying dominant modes, which is represented $R(t)\Psi_{r} \in {\mathbb{C}}^{n\times r}$, where $\Psi _r$ the first $r$ eigenvectors of $\Psi (A)$. 
From the result, if we take small $\varepsilon$, then the estimation  accuracy remains near the Kalman-Bucy filter case. 
Since the spectral gap of $A$ is small, small time step size is required for the stable numerical simulation.  
Because this requires high computational power, this should be resolved in the future. 

From both the analysis and simulations, $\varepsilon$ should be selected by balancing tracking accuracy and numerical stiffness: smaller $\varepsilon$ improves subspace tracking, whereas excessively small $\varepsilon$ may increase computational burden and numerical sensitivity.

   	\begin{figure}[!htbp]
    	\centering
    	\includegraphics[keepaspectratio,width=\linewidth]{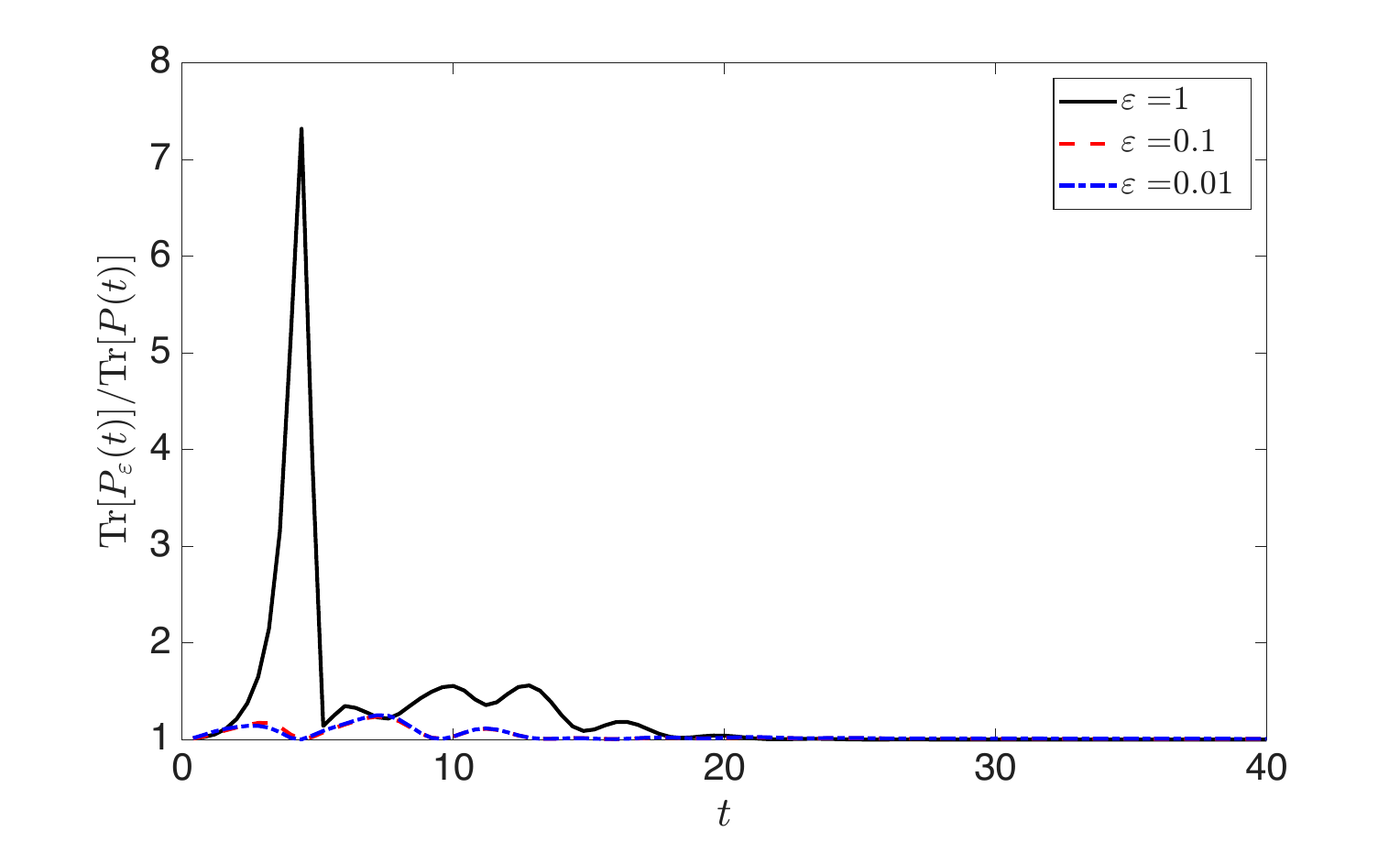}
    	\caption{Plot of ${\mathrm{Tr}}[P_{\varepsilon}(t)] / {\mathrm{Tr}}[P(t)] $ for each time $t$ under different convergence parameter $ \varepsilon = 1, 0.1, 0.01$. }
	\label{fig_LRKBF_covariance}
	\end{figure}

   	\begin{figure}[!htbp]
    	\centering
    	\includegraphics[keepaspectratio,width=\linewidth]{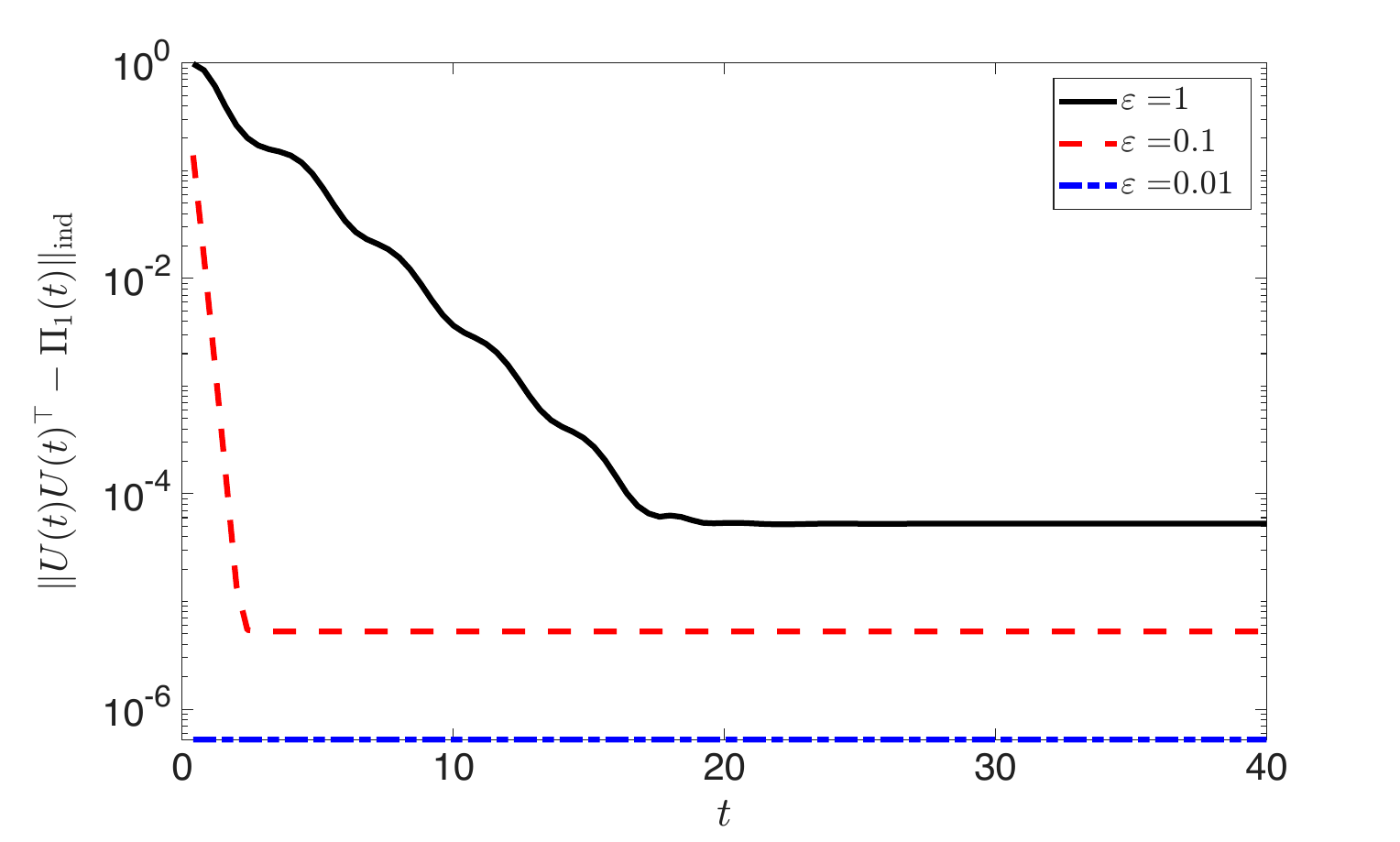}
    	\caption{Plot of $\| U(t) U(t)^{\top} - R(t) \Psi _{r} (\Psi _{r}^{\top} \Psi_r)^{-1} \Psi _{r}^{\top}R(t)^{\top} \| _{\rm ind}$ for each $t$. }
	\label{fig_example_tracking_semilog}
	\end{figure}

\section{Conclusion}

In this paper, we investigate the tracking performance of the Oja flow for time-varying matrices. 
Under certain assumptions, a tracking error was established theoretically. 
A numerical example shows that the correctness of the theoretical results and challenges in practice. 
We will analyze the estimation accuracy of the low-rank Kalman filter for LTV systems in the future work. 

\section*{Acknowledgement}

This work was supported by JSPS KAKENHI Grant Number JP26K07550. 


\appendix
\section{Proof of Proposition \ref{prop:time-varying_positive_part_convergence}}
\label{proof:time-varying_positive_part_convergence}

We first introduce the following lemma that generalizes the results in \cite{sasagawa1982finite}.

    \begin{lemma}[{\cite[Theorems 13 and 14]{kilicaslan2012existence}}]
    \label{lemma:kilicaslan2012existence}
    Consider the following Riccati differential equation;
	\begin{align*}
	\frac{d}{dt}P(t) =& A(t)P(t) + P(t) A(t)^{\top} 
	\\ & - P(t) Q(t) P(t) + R(t)
	, 
	\end{align*}
	where $A(t), Q(t)=Q(t)^{\top}, R(t) = R(t)^{\top} \in {\mathbb{R}}^{n\times n}$, $Q(t)$, $R(t) \geq 0$, and $P(t_{0})=P_{0} \geq 0$. 
	Let $\bar{P}(t) \in {\mathbb{R}}^{n\times n}$ be one of the equilibrium points of the Riccati differential equation corresponding to the values of $A(t)$, $Q(t)$, and $Q(t)$ evaluated at time $t$.  
	Then, $P(t) = Y(t)X(t)^{-1}$ for all $t \in [t_{0},t_{\max})$, where 
	\begin{align*}
	 X(t) 
	=& \hat{\Phi} _{\hat{A}} (t,t_{0}) 
	\\ & 
	+ \int _{t_{0}}^{t} \hat{\Phi}_{\hat{A}}(t,s) R(t) \check{\Phi}_{\check{A}} (s,t_{0}) ds 
	\\ & \quad \quad \times 
	(P_{0} - \bar{P}(t_{0}))  
	,\\
	 Y(t) 
	 =& \bar{P}(t_{\max}) \hat{\Phi}_{\hat{A}} (t,t_{0})  \\ &
	+ 
	\bar{P}(t_{\max}) \int _{t_{0}}^{t} \hat{\Phi}_{\hat{A}}(t,s) R(t) \check{\Phi}_{\check{A}} (s,0) ds 
	\\ & \quad \quad \times (P_{0} - \bar{P}(t_{0}))  
	\\ &  
	+ \hat{\Phi}_{\hat{A}}(t,t_{0})( P_{0} - \bar{P}(t_{0})  )
	,
	\end{align*} 
	$\hat{\Phi}_{\hat{A}}$ and $\check{\Phi}_{\check{A}}$ are transition matrices with respect to $\hat{A}(t) := -A(t)^{\top} + R(t) \bar{P}(t) $ and $\check{A}(t) := A(t) - \bar{P}(t)R(t) $, respectively, 
	$X(t_{0})=I_{n}$, $Y(t_{0}) = P_{0}$, 
	and $t_{\max} := \inf \{ t \geq t_{0} \ | \ {\mathrm{det}} (X(t)) =0 \}$. 
    \end{lemma}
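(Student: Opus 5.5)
This lemma is cited from \cite[Theorems 13 and 14]{kilicaslan2012existence}. My plan is to verify the representation directly, via the standard linearization of the Riccati equation by a Hamiltonian-type system.

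\emph{Step 1: change of variables.} Write $P(t) = \bar{P}(t) + \Delta(t)$, with $\bar{P}$ the frozen equilibrium. Subtracting the algebraic equilibrium relation gives an equation for $\Delta$ of the same Riccati type but without constant term. Its linear part is governed by $\check{A}$ on the left and by $-\hat{A}^{\top}$ on the right, and the quadratic term is $-\Delta Q\Delta$. The time variation of $\bar{P}$ contributes a drift $-\frac{d}{dt}\bar{P}$. This drift is the reason the statement evaluates $\bar{P}$ at $t_{0}$ and at $t_{\max}$ rather than at $t$. I would therefore first treat the case where $\bar{P}$ is constant on the interval under consideration, and absorb any residual into the forcing term.

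\emph{Step 2: linear fractional representation.} A Riccati equation without constant term linearizes: if $\Delta = Z X^{-1}$, where $(X,Z)$ solves a block-triangular linear system with $X(t_{0}) = I_{n}$ and $Z(t_{0}) = P_{0} - \bar{P}(t_{0})$, then $\Delta$ solves the reduced Riccati equation. The steps are as follows.
\begin{itemize}
\item Solve the $Z$-equation with the transition matrix of the Hamiltonian block. This gives $Z(t) = \hat{\Phi}_{\hat{A}}(t,t_{0})(P_{0} - \bar{P}(t_{0}))$, up to the stated convention.
\item Substitute into the $X$-equation and use variation of constants. This produces
\begin{align*}
X(t) = \hat{\Phi}_{\hat{A}}(t,t_{0}) + \int_{t_{0}}^{t} \hat{\Phi}_{\hat{A}}(t,s)\, R\, \check{\Phi}_{\check{A}}(s,t_{0})\, ds \,(P_{0} - \bar{P}(t_{0})).
\end{align*}
\item Set $Y = \bar{P} X + Z$. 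Then $P = Y X^{-1}$, which matches the displayed formula for $Y$.
\end{itemize}

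\emph{Step 3: direct verification.} Differentiate $Y X^{-1}$ using $\frac{d}{dt}X^{-1} = -X^{-1}\dot{X}X^{-1}$, and check that the original Riccati equation holds with the initial value $P_{0}$. By uniqueness of solutions of the Riccati ODE on $[t_{0}, t_{\max})$, where $X$ is invertible by the definition of $t_{\max}$, the representation holds there. Uniqueness applies because the vector field is locally Lipschitz.

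The main obstacle I expect is the bookkeeping of the time-varying equilibrium $\bar{P}(t)$. Its derivative does not vanish, and the statement mixes $\bar{P}(t_{0})$, $\bar{P}(t_{\max})$ and $R(t)$ inside the integral. I would first try to reproduce exactly the convention of \cite{kilicaslan2012existence}, interpreting these as frozen values. If the literal formula fails to verify, I would fall back on citing the reference, since only existence on $[t_{0}, t_{\max})$ and nonnegativity are needed later for Proposition \ref{prop:time-varying_positive_part_convergence}.
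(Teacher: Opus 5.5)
The paper gives no proof of its own; it imports the result from \cite{kilicaslan2012existence}. Your direct verification, via the shift $P=\bar P+\Delta$ and a block-triangular linearization, would therefore be an independent route. As written, however, it does not go through, because you adopt the printed formulas and they do not verify.

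With the quadratic term $-PQP$ and a \emph{constant} equilibrium $\bar P$, one gets $\dot\Delta=\check A\Delta-\Delta\hat A-\Delta Q\Delta$, where $\check A=A-\bar PQ$ and $\hat A=-A^{\top}+Q\bar P$. So $Q$, not $R$, must appear in $\check A$ and $\hat A$, and also in the integrand of $X$ (as $Q(s)$, not $R(t)$). Your Step~1 claim about the linear part is therefore false for the printed $\check A,\hat A$.

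More importantly, the triangular system is $\dot X=\hat AX+QZ$, $\dot Z=\check AZ$. Hence $Z(t)=\check\Phi_{\check A}(t,t_0)(P_0-\bar P)$, not $\hat\Phi_{\hat A}(t,t_0)(P_0-\bar P)$ as in your first bullet. With your $Z$ one has $\dot Z=\hat AZ$, so $\frac{d}{dt}(ZX^{-1})$ begins with $\hat A\Delta$ instead of $\check A\Delta$, and Step~3 fails. In particular, $Y=\bar PX+Z$ does not match the displayed $Y$: its last term must be $\check\Phi_{\check A}(t,t_0)(P_0-\bar P)$.

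Your treatment of a time-varying $\bar P$ also fails. If $\bar P(t)$ is only a frozen-time equilibrium, the shifted equation acquires the constant term $-\dot{\bar P}$. The linear system then becomes $\dot Z=\check AZ-\dot{\bar P}X$, $\dot X=\hat AX+QZ$, which is no longer triangular. No formula built from the separate transition matrices of $\hat A$ and $\check A$ can come out of it, and absorbing the residual into the forcing does not change that. The printed statement is in fact inconsistent here: it gives $Y(t_0)=\bar P(t_{\max})+P_0-\bar P(t_0)$. This equals the asserted $P_0$ only if $\bar P(t_{\max})=\bar P(t_0)$, and $\bar P(t_{\max})$ is undefined when $t_{\max}=\infty$. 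What your argument correctly proves is the version with $\bar P$ constant. More generally it works when $\bar P(t)$ is a particular \emph{solution} of the Riccati ODE; then the shifted equation is homogeneous and $Y=\bar P(t)X+Z$.

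Finally, your fallback remark is inaccurate. The proof of Lemma~\ref{lem:well_definedness_time_varying_Oja_flow} uses the explicit representation, not just existence and nonnegativity. Fortunately that application has $R=0$, $\bar P=0$, $A=B$, $Q=2B_{\rm sym}$. Let $\Phi_B$ denote the transition matrix of $B$. The corrected Steps~2--3 then give $Z=\Phi_B(t,t_0)P_0$ and $X=\Phi_B(t,t_0)^{-\top}(I_n+G(t,t_0)P_0)$. Hence $P(t)=\Phi_B(t,t_0)P_0(I_n+G(t,t_0)P_0)^{-1}\Phi_B(t,t_0)^{\top}$ exactly, so the corrected plan does replace the citation in the only case the paper uses.
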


    From this lemma, we have a generalization of Lemmas 7 and 8 in \cite{TsuzukiOhki2025global}.

    \begin{lemma} \label{lem:well_definedness_time_varying_Oja_flow}
    For a given matrix-valued bounded function $A: {\mathbb{R}} \to {\mathbb{R}}^{n\times n}$ and arbitrarily positive constant $c>0$, take $a>0$ such that $A_{\rm sym}(t) + aI_{n} >cI_{n}$ for all $t\in \mathbb{R}$. 
    If $U(t_{0}) = U_{0} \in {\mathbb{R}}^{n\times r}$ is a full rank matrix, then $U(t)$ remains full rank for all $t\geq t_{0}$ and its singular values converges to $1$. 
    Furthermore, $\lambda _{r}( U(t)^{\top} U(t)) \geq \alpha$, $\alpha := \min \{ 1 , \lambda _{r} (U_{0}^{\top}U_{0}) \}$ for all $t\geq t_{0}$. 
    \end{lemma}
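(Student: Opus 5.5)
The plan is to reduce the flow to an equation for the Gram matrix $S(t) := U(t)^{\top}U(t)$ and show that this equation is a Riccati equation to which Lemma \ref{lemma:kilicaslan2012existence} applies. Differentiating and using \eqref{eq:time-varying_Oja_flow} gives
\begin{align*}
\varepsilon \frac{d}{dt} S(t) = U^{\top}(A(t)+A(t)^{\top})U - 2 S(t)\, U^{\top}A_{\rm sym}(t) U ,
\end{align*}
up to symmetrization. First I replace $A(t)$ by $A(t)+aI_n$. This leaves the flow unchanged, because $(I_n-UU^{\top})aU = a(U - US)$ only rescales and can be absorbed when one works with the projector. The cleaner route is to write everything in terms of $W(t):=S(t)^{-1}$ while $S$ is invertible. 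With $M(t):=U^{\top}(A_{\rm sym}(t)+aI_n)U$, the computation yields a linear-in-$W$ relation of the form $\varepsilon \dot W = -2 M W + 2M$, suitably symmetrized. Equivalently, $S$ solves a Riccati equation of the type in Lemma \ref{lemma:kilicaslan2012existence} with coefficient $Q = 2M \geq 0$, forcing $R = 2M \geq 2c S$, and the time-varying equilibrium $\bar P = I_r$.

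The second step is to invoke Lemma \ref{lemma:kilicaslan2012existence} with $P_0 = S(t_0) = U_0^{\top}U_0 > 0$, which holds because $U_0$ has full rank. The lemma gives the representation $S = YX^{-1}$ on $[t_0,t_{\max})$. I would then show $t_{\max}=\infty$. Since $\bar P = I_r$ is constant, the transition matrices are driven by $-M/\varepsilon$, and $M$ is positive definite as long as $S$ is. So $X(t)$ is a sum of a contracting transition matrix and a nonnegative integral term multiplied by $(P_0 - I)$. For the direction where $P_0 - I < 0$ I would argue by comparison instead: the scalar bound $\frac{d}{dt}\lambda_r(S) \geq \frac{2c}{\varepsilon}\lambda_r(S)(1-\lambda_r(S))$ follows from $M \ge c S$. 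This gives $\lambda_r(S(t)) \geq \min\{1,\lambda_r(S_0)\} = \alpha$ for all $t$, so $S$ never loses rank and $U$ stays full rank. A symmetric argument on $\lambda_1(S)$ gives $\lambda_1(S(t))\le \max\{1,\lambda_1(S_0)\}$.

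The last step is convergence of the singular values to $1$. I would set $E := S - I_r$ and derive $\varepsilon\dot E = -(ME + EM)$ up to higher-order terms. Together with $M \geq c\,S \geq c\alpha I_r$, Grönwall then gives exponential decay of $\|E\|$. This yields the rate claimed in Proposition \ref{prop:time-varying_positive_part_convergence}, with $\alpha$ as defined here.

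I expect the main obstacle to be making the Riccati identification exact, with $R$ positive semidefinite and uniform in time. The matrix $M$ depends on $U$ itself and not only on $S$, so Lemma \ref{lemma:kilicaslan2012existence} has to be applied with $M(t)$ treated as a given bounded coefficient along the trajectory. If that bookkeeping fails, I would fall back entirely on the eigenvalue comparison argument for $\lambda_r(S)$ and $\lambda_1(S)$.
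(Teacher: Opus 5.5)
Your headline route does not work as written, but the comparison argument you hold in reserve is a complete proof. There are two concrete errors. (i) Replacing $A$ by $A+aI_n$ does \emph{not} leave the flow unchanged. It adds $aU(I_r-S)$ to $\varepsilon\dot U$. That term indeed leaves the orthogonal projector onto $\mathrm{span}\,U$ unchanged, but it adds $2a(S-S^2)$ to $\varepsilon\dot S$, and that is exactly the term that makes $S=I_r$ attracting. For the flow driven by $A$ itself the lemma is false: for $A=-I_n$ one gets $\varepsilon\dot S=-2S(I_r-S)$, so eigenvalues below $1$ decay to $0$. The lemma must therefore be read as a statement about $\varepsilon\dot U=(I_n-UU^{\top})BU$ with $B:=A+aI_n$, which is what the paper's proof implicitly does by writing its Riccati equation with $B$. (ii) Your Gram-matrix equations are wrong. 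Set $N:=U^{\top}BU$, $M:=U^{\top}B_{\rm sym}U$, $N_{\rm sk}:=(N-N^{\top})/2$ and $[X,Y]:=XY-YX$. The exact equation is $\varepsilon\dot S=2M-MS-SM+[N_{\rm sk},S]$; the commutator is what ``up to symmetrization'' hides. For $W=S^{-1}$ this gives $\varepsilon\dot W=MW+WM-2WMW+[N_{\rm sk},W]$, which is quadratic in $W$, not linear. It is a Riccati equation with $Q=2M$, $R=0$ and equilibrium $I_r$, not $R=2M$. Its coefficients also depend on $U$, not on $S$ alone, so Lemma~\ref{lemma:kilicaslan2012existence} gives you no usable control. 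This is precisely why the paper works with $P=UU^{\top}$: its equation $\dot P=BP+PB^{\top}-2PB_{\rm sym}P$ has solution-independent coefficients and the constant equilibrium $\bar P=0$.

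The comparison argument is a different and complete route. For a unit eigenvector $v$ of $S$ with eigenvalue $\mu$, the commutator drops out and $\varepsilon\, v^{\top}\dot S v=2(1-\mu)\,v^{\top}Mv$, with $v^{\top}Mv\ge c\mu$ because $M\ge cS$. Using Dini derivatives at repeated eigenvalues, $\lambda_r(S)$ cannot decrease while $\lambda_r(S)\le 1$. Your bound $\frac{2c}{\varepsilon}\lambda_r(1-\lambda_r)$ holds only in that regime, but that is all you need for $\lambda_r(S)\ge\alpha$. Likewise $\lambda_1(S)$ cannot increase while $\lambda_1(S)\ge 1$, which also gives the a priori bound needed for global existence. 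For $E=S-I_r$ the equation $\varepsilon\dot E=-(ME+EM)+[N_{\rm sk},E]$ is exact, not a linearization. Since $\mathrm{tr}(E[N_{\rm sk},E])=0$ and $M\ge c\alpha I_r$, you get $\frac{d}{dt}\|E\|_F^2\le-\frac{4c\alpha}{\varepsilon}\|E\|_F^2$ globally, which is exactly the rate in Proposition~\ref{prop:time-varying_positive_part_convergence}. The paper instead writes the solution in closed form via Lemma~\ref{lemma:kilicaslan2012existence}, essentially $P(t)=\Phi P_0\bigl(I_n+(H-I_n)P_0\bigr)^{-1}\Phi^{\top}$. Here $\Phi$ is the transition matrix of $B$ and $H=\Phi^{\top}\Phi$ grows exponentially because $B_{\rm sym}>cI_n$. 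It then defers to Lemmas 7 and 8 of \cite{TsuzukiOhki2025global} for invertibility, the bound $\alpha$ and convergence. The paper's route gives an explicit trajectory, subspace included. Yours is self-contained, needs neither the transition-matrix formula nor the external lemmas, and gives the decay rate directly. Drop the $W$/Riccati step and present the comparison argument as the proof.
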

    
    \begin{proof}
    Let $B(t) := A (t) + aI_{n}$. 
    For simplicity, we only consider $\varepsilon =1$. 
    Define $P(t) := U(t) U(t)^{\top}$, and then its time evolution equation is
    \begin{align}
    \frac{d}{dt} P(t) = & B (t) P(t) + P(t) B(t)^{\top} \nonumber
    \\ & 
    - 2 P(t) B_{\rm sym}(t) P(t) 
    \label{eq:time-varying_riccati_oja_flow}
    \end{align}
    with $P(t_{0}) = P_{0} = U_{0} U_{0} ^{\top}$. 
    Since $ \bar{P}= 0_{n,n}$ is an equilibrium of \eqref{eq:time-varying_riccati_oja_flow} at all $t \in \mathbb{R}$, from Lemma \ref{lemma:kilicaslan2012existence}, the solution of \eqref{eq:time-varying_riccati_oja_flow} is as following: 
    \begin{align*}
    P(t) = \hat{\Phi }_{B}(t,t_{0}) P_{0} 
    (I_{n} + G(t,t_{0}) P_{0}) ^{-1}
    \hat{\Phi}_{B^{\top}}(t,t_{0})  ,
    \end{align*}
    where 
    \begin{align*}
    G(t,t_{0}) :=   2 \int _{t_{0}}^{t} \check{\Phi}_{B^{\top}} (s,t_{0})  B_{\rm sym}(s) \check{\Phi}_{B}(s, t_{0}) ds 
    .
    \end{align*}
    Notice that since $B(\bullet ) = A(\bullet ) + aI_{n}$ is bounded, the transition matrix $\hat{\Phi} _{B} (t,t_{0})$ always has its inverse. 
    The similar arguments in the proof of Lemma 7 in \cite{TsuzukiOhki2025global} show that $ ( I_{n} + G(t,t_{0}) P_{0} )$ is invertible for all $t \geq t_{0}$. 
    Therefore, Equation \eqref{eq:time-varying_Oja_flow} preserves the rank condition of $U(t)$ for all $t \geq t_{0}$. 
    As in the proof of Lemma 8 in \cite{TsuzukiOhki2025global}, 
    \begin{align*}
    G(t,t_{0}) = \int_{t_{0}}^{t} \frac{d}{ds}H(s,t_{0}) ds = H(t,t_{0}) - I_{n}
    \end{align*}
    holds, where $H(t,t_{0}) = \check{\Phi}_{B^{\top}} (t,t_{0}) \check{\Phi}_{B} (t,t_{0})$.  
    From the assumption, $B_{\rm sym}(t) > cI_{n}$ for all $t\geq t_{0}$ and therefore, $\frac{d}{dt} H(t,t_{0}) > c H(t,t_{0}) > 0$ for all $t\geq 0$. 
    Since $H(t,t_{0})$ diverges exponentially, $\check{\Phi}_{B} (t,t_{0})$ also diverges. 
    The same arguments of the proof of Lemma 8 in \cite{TsuzukiOhki2025global} conclude that $U(t)$ is of full rank for all $t\geq t_0$, its singular values converge to $1$, and the minimal eigenvalue of $U(t)^{\top}U(t)$ is bounded below by $\alpha = \min \{ 1, \lambda _{r} (U_{0}^{\top}U_{0}) \}$. 
    \end{proof}

	From the above results, the same argument of the proof of Theorem 9 in \cite{TsuzukiOhki2025global} concludes the statement of Prop. \ref{prop:time-varying_positive_part_convergence}.

\end{document}